\title{Invariant-Based Cryptography: Toward a General Framework}
\author{Stanislav Semenov \\
\href{mailto:stas.semenov@gmail.com}{stas.semenov@gmail.com} \\
\href{https://orcid.org/0000-0002-5891-8119}{ORCID: 0000-0002-5891-8119}}
\date{May 12, 2025}
\theoremstyle{definition}
\newtheorem{definition}{Definition}[section]
\theoremstyle{plain}
\newtheorem{theorem}[definition]{Theorem}
\theoremstyle{remark}
\begin{document}

\maketitle

\begin{abstract}
We develop a generalized framework for invariant-based cryptography by extending the use of structural identities as core cryptographic mechanisms. Starting from a previously introduced scheme where a secret is encoded via a four-point algebraic invariant over masked functional values, we broaden the approach to include multiple classes of invariant constructions. In particular, we present new symmetric schemes based on shifted polynomial roots and functional equations constrained by symmetric algebraic conditions, such as discriminants and multilinear identities. These examples illustrate how algebraic invariants—rather than one-way functions—can enforce structural consistency and unforgeability. We analyze the cryptographic utility of such invariants in terms of recoverability, integrity binding, and resistance to forgery, and show that these constructions achieve security levels comparable to the original oscillatory model. This work establishes a foundation for invariant-based design as a versatile and compact alternative in symmetric cryptographic protocols.
\end{abstract}

\subsection*{Mathematics Subject Classification}
03F60 (Constructive and recursive analysis), 94A60 (Cryptography)

\subsection*{ACM Classification}
F.4.1 Mathematical Logic, E.3 Data Encryption

\section*{Introduction}

Invariant-based cryptography (IBC) is a recently introduced paradigm in symmetric cryptographic design, in which structural identities---\emph{invariants}---replace traditional one-way functions as the core source of security. In contrast to algebraic hardness assumptions that rely on inversion difficulty, IBC schemes derive robustness from the \emph{preservation} of a deterministic relation among data points. Any deviation from the intended structure breaks the invariant, enabling implicit integrity checks and recovery logic without revealing secrets.

A concrete realization of this idea was presented in~\cite{semenov2025invariant}, where a symmetric cryptographic primitive was constructed around an exact four-point functional identity over discretized oscillatory functions. That construction demonstrated that a secret index, hidden within an evaluation grid, could be cryptographically protected not through concealment but through the impossibility of structurally consistent extension. The resulting scheme was compact, invertible only with internal alignment parameters, and resistant to adversarial manipulation.

The present work generalizes that approach. Instead of relying on a single analytic invariant, we introduce a flexible framework that incorporates a broader class of algebraic structures---such as discriminants of polynomials and projective cross-ratios---as cryptographic invariants. These constructions preserve the core idea of invariant-based security while broadening its scope and applicability. We describe a range of symmetric schemes, analyze their recoverability and integrity properties, and show how invariant reuse enables efficient session modes.

Our contribution is threefold:
\begin{enumerate}
    \item We extract the underlying design principles from the oscillatory model and express them in terms of algebraic invariants acting over masked or shared components;
    \item We construct new symmetric protocols based on discriminants and cross-ratios, each enforcing consistency via a structurally rigid identity;
    \item We analyze the cryptographic implications of invariant reuse, indistinguishability under masking, and the design of lightweight verification logic.
\end{enumerate}

Taken together, these developments define a general framework for invariant-based cryptography---compact, self-validating, and structurally expressive. The result is a novel design axis in symmetric cryptographic engineering: \emph{not what is hidden, but what is preserved}.

\section{Symmetric Discriminant-Based Scheme}

One of the most classical examples of an algebraic invariant is the \emph{discriminant} of a polynomial. Originating in 19th-century algebraic analysis~\cite{gelfand1994discriminants}, the discriminant provides a compact expression that vanishes if and only if a polynomial has repeated roots. More generally, it encodes the symmetric relations among a set of roots and remains invariant under permutation and certain coordinate transformations.

In this section, we apply the discriminant as a cryptographic invariant within a symmetric protocol. By constructing a cubic polynomial whose roots include a hidden component, we ensure that the transmitted data satisfy a structural constraint that is both verifiable and nontrivial to forge. The security of the scheme arises not from concealing the polynomial itself, but from the difficulty of reconstructing its root structure from partial information.

This use of the discriminant continues the central theme of invariant-based cryptography: leveraging preserved algebraic identities to enforce structural consistency, integrity, and recoverability. The result is a lightweight symmetric scheme in which a secret offset is implicitly encoded within a polynomial, and retrieved through constrained evaluation and algebraic recovery.

\subsection{Common Setup}

\begin{itemize}
  \item \textbf{Public parameters:}
    \begin{itemize}
      \item A fixed field modulus \( M \in \mathbb{N} \), typically a 256-bit prime;
      \item A shared evaluation function \( t = t(S, z) \in \mathbb{Q} \), derived from session nonce and shared secret;
      \item A secure hash function \( H \colon \{0,1\}^* \to \mathbb{Z}_M \), used for integrity binding;
      \item Cubic polynomials \( P(x) \in \mathbb{Z}_M[x] \) defined by their roots and used to encode algebraic invariants.
    \end{itemize}
  \item \textbf{Shared secret:} A 256-bit string \( S \in \{0,1\}^{256} \), known to both parties.
\end{itemize}

\subsection{Alice’s Generation}

Given a session-specific nonce \( z \in \{0,1\}^{256} \), Alice performs:

\begin{enumerate}
  \item Derives the rational evaluation point \( t := t(S, z) \in \mathbb{Q} \), computed via a PRF or hash-based derivation.

  \item Selects:
    \begin{itemize}
      \item Random secret offset \( h \in \mathbb{Z}_M \);
      \item Three distinct roots \( a_1, a_2, a_3 \in \mathbb{Z}_M \), where \( a_1 \) is kept private and \( a_2, a_3 \) will be transmitted.
    \end{itemize}

  \item Constructs the cubic polynomial:
    \[
    P(x) = (x - a_1)(x - a_2)(x - a_3) \in \mathbb{Z}_M[x].
    \]

  \item Computes:
    \begin{itemize}
      \item Discriminant \( D := \Delta(P) \), an invariant of the root structure;
      \item Shifted evaluation \( y := P(t + h) \in \mathbb{Z}_M \), hiding the shift \( h \);
      \item Binding hash \( H_{\mathrm{check}} := H(S, z, a_2, a_3, D, y) \);
      \item Optional second-level hash \( H_{\mathrm{auth}} := H(S, z, h) \).
    \end{itemize}

  \item Sends to Bob:
    \[
        \langle a_2,\ a_3,\ D,\ y,\ z,\ H_{\mathrm{check}} \rangle
        \quad \text{with optional } H_{\mathrm{auth}}.
    \]
\end{enumerate}

\subsection{\texorpdfstring{Bob’s Verification and Recovery of \( h \)}{Bob's Verification and Recovery of h}}

\begin{enumerate}
  \item Computes \( t := t(S, z) \in \mathbb{Q} \) deterministically from the shared secret and nonce.
  
  \item Verifies integrity of transmitted data:
    \[
    H(S, z, a_2, a_3, D, y) \stackrel{?}{=} H_{\mathrm{check}}.
    \]
    If this fails, abort immediately.

  \item Given \( a_2, a_3, D \), reconstructs candidate values for \( a_1 \) by solving the discriminant identity:
    \[
    \Delta(a_1, a_2, a_3) = D.
    \]
    This typically yields up to two valid candidates.

  \item For each candidate \( a_1 \), reconstructs the polynomial:
    \[
    P(x) = (x - a_1)(x - a_2)(x - a_3).
    \]

  \item Forms the equation:
    \[
    P(t + h) = y,
    \]
    where all quantities are known except \( h \). Expanding the left-hand side yields a cubic equation in \( h \).

  \item Solves this equation in \( \mathbb{Z}_M \) to recover valid roots \( h_i \).\\
    If \( H_{\mathrm{auth}} \) is present, disambiguation is performed via:
    \[
    H(S, z, h_i) \stackrel{?}{=} H_{\mathrm{auth}}.
    \]
    The correct \( h \) is accepted only if exactly one root matches.\\
    If \( H_{\mathrm{auth}} \) is absent, all valid roots are considered acceptable.
\end{enumerate}

\subsection{Remarks on Generalization}

The scheme presented above uses the discriminant of a cubic polynomial as a concrete and accessible example of an algebraic invariant applied in a symmetric cryptographic setting. However, this choice is far from exhaustive. In principle, analogous constructions can be built using polynomials of higher degree, leading to richer invariant structures and more complex root configurations.

Moreover, future protocols may consider distributed variants, in which different participants know different components of a shared polynomial---for instance, different root subsets or partial coefficient sets. This opens the door to schemes where structural invariants link knowledge held across multiple parties, enabling threshold verification, cross-authentication, or collaborative key recovery.

While such generalizations are conceptually appealing, we deliberately refrain from abstracting them in full generality here. Instead, we leave this direction as an open path for future exploration, emphasizing that the discriminant-based method shown above is only one instance of a broader class of invariant-driven designs.

\section{Session Modes with Invariant Reuse}

Invariant-based schemes allow for flexibility in how root triples and transmitted values are generated and interpreted across sessions. Two primary modes of operation can be distinguished depending on whether the invariant (e.g., the discriminant of the cubic polynomial) is re-transmitted for each message or reused throughout a session.

\subsection{Derived Invariant Mode (Minimal Transmission)}

In this mode, both parties derive a shared invariant value \( D \in \mathbb{Z}_M \) deterministically from the session parameters:
\[
D := H_{\mathrm{inv}}(S, z),
\]
where \( H_{\mathrm{inv}} \) is a cryptographic hash function with domain separation. This invariant defines the discriminant constraint on root triples used in the session.

For each message, Alice performs the following:

\begin{enumerate}
  \item Samples a random root triple \( (a_1, a_2, a_3) \in \mathbb{Z}_M^3 \) such that
  \[
  \Delta(a_1, a_2, a_3) = D.
  \]
  
  \item Chooses a random offset \( h \in \mathbb{Z}_M \), constructs the cubic polynomial
  \[
  P(x) = (x - a_1)(x - a_2)(x - a_3),
  \]
  and evaluates
  \[
  y := P(t + h),
  \]
  where \( t = t(S, z) \in \mathbb{Q} \) is the shared evaluation point.

  \item Sends only \( \langle a_2,\ a_3,\ y \rangle \) to Bob.
\end{enumerate}

Bob, upon receiving the message, performs:

\begin{enumerate}
  \item Recomputes \( t := t(S, z) \) and \( D := H_{\mathrm{inv}}(S, z) \).

  \item Solves \( \Delta(a_1, a_2, a_3) = D \) to recover the matching root \( a_1 \) consistent with the fixed invariant.

  \item Constructs \( P(x) \) and solves \( P(t + h) = y \) to recover \( h \).
\end{enumerate}

This mode enables compact messages and invariant enforcement without repeated transmission of \( D \), while maintaining full recoverability of the embedded value.

\subsection{Shared Root Mode (Invariant-Free Continuation)}

In this variant, the invariant is used only in the initial transmission to allow Bob to recover a hidden root \( a_1 \), after which both parties operate with a shared, fixed \( a_1 \) in subsequent messages. This eliminates the need to transmit or enforce the invariant across the session and simplifies future communication.

\paragraph{Initialization phase.}
Alice generates a full root triple \( (a_1, a_2, a_3) \) with discriminant \( D := \Delta(a_1, a_2, a_3) \), chooses a random offset \( h \), and sends:
\[
\langle a_2,\ a_3,\ D,\ y \rangle,
\quad \text{where } y := P(t + h), \quad P(x) = (x - a_1)(x - a_2)(x - a_3).
\]
Bob receives the values, reconstructs \( a_1 \) from the discriminant condition, and solves \( P(t + h) = y \) to find \( h \).  
Now, both parties share the value of \( a_1 \) and agree to keep it fixed throughout the session.

\paragraph{Streaming phase.}
For each subsequent message, Alice:
\begin{enumerate}
  \item Picks new open roots \( a_2, a_3 \in \mathbb{Z}_M \) and a new random offset \( h \in \mathbb{Z}_M \);
  \item Builds the polynomial \( P(x) = (x - a_1)(x - a_2)(x - a_3) \);
  \item Sends only:
  \[
  \langle a_2,\ a_3,\ h \rangle.
  \]
\end{enumerate}

Bob, knowing \( t \), \( a_1 \), and receiving \( a_2, a_3, h \), performs:
\begin{enumerate}
  \item Reconstructs \( P(x) \) using shared \( a_1 \) and received \( a_2, a_3 \);
  \item Computes the shared value:
  \[
  y := P(t + h),
  \]
  which can serve as a symmetric session secret, ephemeral key, or cryptographic tag.
\end{enumerate}

\paragraph{Note.}
This mode eliminates the need for discriminant constraints or invariant recovery after initialization. It is compact and efficient but requires trusting that \( a_1 \) was correctly recovered and agreed upon during setup.

\section{Symmetric Cross-Ratio Scheme}

The \emph{cross-ratio} is a fundamental invariant in projective geometry, preserved under all Möbius (or fractional linear) transformations. Defined for ordered quadruples of points on the projective line, it captures a canonical relation that remains unchanged under the full action of the group $\mathrm{PGL}_2(\mathbb{Z}_M)$. As such, the cross-ratio has long played a central role in classical geometry, complex analysis, and invariant theory~\cite{coxeter2003projective}.

In this section, we adapt the cross-ratio as a cryptographic invariant within a symmetric scheme. The central idea is to hide one element of a quadruple whose cross-ratio is fixed and recoverable only by parties with knowledge of a shared secret. Projective masking is employed to obscure the visible components, ensuring that transmitted data reveals no structural alignment while preserving invariant verifiability.

This construction continues the invariant-based paradigm by demonstrating that projective consistency---rather than arithmetic secrecy---can serve as the basis for symmetric coordination, integrity enforcement, and data binding.

\subsection{Common Setup}

\begin{itemize}
  \item \textbf{Public parameters:}
    \begin{itemize}
      \item A fixed prime modulus \( M \in \mathbb{N} \), typically 256 bits;
      \item A secure hash function \( H \colon \{0,1\}^* \to \mathbb{Z}_M^\times \), used to derive the session invariant;
      \item An optional masking map \( f(z) = \frac{az + b}{cz + d} \in \mathrm{PGL}_2(\mathbb{Z}_M) \), invertible and known only to legitimate parties;
      \item A canonical expression for the cross-ratio:
        \[
        \mathrm{CR}(z_1, z_2;\ z_3, z_4) := \frac{(z_1 - z_3)(z_2 - z_4)}{(z_1 - z_4)(z_2 - z_3)} \mod M.
        \]
    \end{itemize}
  \item \textbf{Shared secret:} A 256-bit string \( S \in \{0,1\}^{256} \), known to both parties.
\end{itemize}

\subsection{Alice’s Generation}

Given a session-specific nonce \( z \in \{0,1\}^{256} \), Alice proceeds as follows:

\begin{enumerate}
  \item Derives the session invariant:
    \[
    I := H(S, z) \in \mathbb{Z}_M^\times.
    \]

  \item Selects arbitrary but distinct elements \( z_1, z_2, z_3 \in \mathbb{Z}_M \), ensuring denominators in CR do not vanish.

  \item Solves the equation:
    \[
    \mathrm{CR}(z_1, z_2;\ z_3, z_4) = I,
    \]
    to compute the unique \( z_4 \in \mathbb{Z}_M \) satisfying the relation.

  \item Applies optional projective masking (if used):
    \[
    \tilde{z}_i := f(z_i),\quad \text{for } i = 1, 2, 3.
    \]

  \item Sends to Bob:
    \[
    \langle \tilde{z}_1,\ \tilde{z}_2,\ \tilde{z}_3,\ z \rangle.
    \]
\end{enumerate}

\subsection{\texorpdfstring{Bob’s Recovery of \( z_4 \)}{Bob's Recovery of z4}}

\begin{enumerate}
  \item Computes the session invariant:
    \[
    I := H(S, z).
    \]

  \item Applies inverse masking (if used):
    \[
    z_i := f^{-1}(\tilde{z}_i),\quad \text{for } i = 1, 2, 3.
    \]

  \item Solves for \( z_4 \in \mathbb{Z}_M \) the cross-ratio equation:
    \[
    \frac{(z_1 - z_3)(z_2 - z_4)}{(z_1 - z_4)(z_2 - z_3)} = I.
    \]
    This is a rational equation linear in \( z_4 \), solvable by:
    \[
    z_4 = \frac{
      (z_1 - z_3)(z_2) - I(z_2 - z_3)(z_1)
    }{
      (z_1 - z_3) - I(z_2 - z_3)
    } \mod M,
    \]
    assuming the denominator is invertible modulo \( M \).

  \item Uses \( z_4 \) as a session-specific shared value or input to a key derivation function.
\end{enumerate}

\paragraph{Projective masking via \(\mathrm{PGL}_2(\mathbb{Z}_M)\).}

To protect the transmitted values \( z_i \) from structural analysis, we apply a shared masking transformation drawn from the projective general linear group \( \mathrm{PGL}_2(\mathbb{Z}_M) \). This group consists of all invertible Möbius (fractional linear) transformations over the field \( \mathbb{Z}_M \), defined by:

\[
f(z) = \frac{a z + b}{c z + d}, \quad \text{where } a,b,c,d \in \mathbb{Z}_M,\ ad - bc \not\equiv 0 \mod M.
\]

Two such transformations are considered equivalent if they differ by a nonzero scalar multiple; that is, \( f \sim \lambda f \) for \( \lambda \in \mathbb{Z}_M^\times \). The set of equivalence classes under this relation forms the group \( \mathrm{PGL}_2(\mathbb{Z}_M) \), which acts on the projective line \( \mathbb{P}^1(\mathbb{Z}_M) = \mathbb{Z}_M \cup \{\infty\} \).

These transformations preserve the cross-ratio:
\[
\mathrm{CR}(f(z_1), f(z_2); f(z_3), f(z_4)) = \mathrm{CR}(z_1, z_2; z_3, z_4),
\]
making them ideal for cryptographic masking: the functional invariant remains unchanged, while the inputs become indistinguishable from pseudorandom noise.

In this scheme, the transformation \( f \in \mathrm{PGL}_2(\mathbb{Z}_M) \) is derived from the shared secret and session nonce, for example:
\[
f := \left(\frac{a z + b}{c z + d}\right), \quad \text{where } (a, b, c, d) := H_{\mathrm{mask}}(S, z) \mod M,\ ad - bc \ne 0.
\]
The inverse transformation \( f^{-1} \) is known to both parties, allowing recovery of the original points before cross-ratio evaluation.

\paragraph{Integrity binding (optional).}
To prevent tampering, Alice may also include a hash tag:
\[
H_{\mathrm{check}} := H(S, z, \tilde{z}_1, \tilde{z}_2, \tilde{z}_3),
\]
which Bob verifies before using the result.

\begin{theorem}[Invariant Indistinguishability under Projective Masking]
Let \( M \) be a large prime modulus, and let \( I \in \mathbb{Z}_M^\times \) be a fixed value. Suppose that for each session \( i = 1, \dots, N \), a cross-ratio identity holds:
\[
\mathrm{CR}(z_1^{(i)}, z_2^{(i)};\ z_3^{(i)}, z_4^{(i)}) = I,
\]
and that only the masked triples
\[
(\tilde{z}_1^{(i)}, \tilde{z}_2^{(i)}, \tilde{z}_3^{(i)}), \quad \tilde{z}_j^{(i)} := f_i(z_j^{(i)}), \quad j = 1,2,3,
\]
are observable to an adversary, where each \( f_i \in \mathrm{PGL}_2(\mathbb{Z}_M) \) is independently and uniformly chosen.

Then, in the absence of knowledge of \( z_4^{(i)} \), any polynomial-time adversary has negligible advantage in recovering or distinguishing the invariant \( I \), even given unbounded access to the masked triples.
\end{theorem}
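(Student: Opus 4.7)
The plan is to exploit the fact that $\mathrm{PGL}_2(\mathbb{Z}_M)$ acts \emph{sharply $3$-transitively} on the projective line $\mathbb{P}^1(\mathbb{Z}_M)$: for any two ordered triples of pairwise distinct points $(p_1,p_2,p_3)$ and $(q_1,q_2,q_3)$, there exists a unique Möbius map sending $p_j \mapsto q_j$. I would first recall this standard fact and note that $|\mathrm{PGL}_2(\mathbb{Z}_M)| = (M{+}1)M(M{-}1)$ equals the number of ordered triples of distinct points on $\mathbb{P}^1(\mathbb{Z}_M)$, so the orbit map $f \mapsto (f(z_1),f(z_2),f(z_3))$ is a bijection for every fixed triple of distinct base points.

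From this the core observation follows: if $f_i$ is drawn uniformly from $\mathrm{PGL}_2(\mathbb{Z}_M)$, then the masked triple $(\tilde{z}_1^{(i)}, \tilde{z}_2^{(i)}, \tilde{z}_3^{(i)}) = (f_i(z_1^{(i)}), f_i(z_2^{(i)}), f_i(z_3^{(i)}))$ is uniformly distributed over the set of ordered triples of pairwise distinct points, \emph{regardless of the unmasked values} $(z_1^{(i)}, z_2^{(i)}, z_3^{(i)})$ that Alice selected. Independence of the $f_i$ across sessions then implies that the full transcript $\{(\tilde{z}_1^{(i)}, \tilde{z}_2^{(i)}, \tilde{z}_3^{(i)})\}_{i=1}^{N}$ is distributed as $N$ independent copies of this uniform distribution on distinct triples.

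I would then emphasize that this distribution has no dependence on $I$ whatsoever: the invariant enters the construction only through $z_4^{(i)}$, which is never transmitted, and the random masking dissolves any residual correlation between the visible triple and the hidden $z_4^{(i)}$. Consequently the adversary's view and the random variable $I$ are \emph{statistically independent}, so even an unbounded distinguisher achieves advantage exactly $0$; in particular any polynomial-time adversary has negligible (in fact zero) advantage in recovering $I$ or distinguishing it from a uniformly sampled element of $\mathbb{Z}_M^\times$.

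The step I expect to require the most care is the passage from the idealized assumption (each $f_i$ drawn uniformly and independently from $\mathrm{PGL}_2(\mathbb{Z}_M)$) to the protocol as implemented, where $f_i$ is produced deterministically via $H_{\mathrm{mask}}(S,z)$. That gap is not covered by the theorem statement itself, but I would flag it explicitly and indicate that the standard reduction replaces $H_{\mathrm{mask}}$ by a random oracle or PRF, at which point the information-theoretic argument above applies verbatim up to a PRF-distinguishing term. A minor secondary point is handling degenerate configurations such as $z_4^{(i)} \in \{z_1^{(i)}, z_2^{(i)}, z_3^{(i)}\}$ or the vanishing of cross-ratio denominators; these occur with probability $O(N/M)$ and are absorbed by a routine union bound.
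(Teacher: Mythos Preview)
Your proposal is correct and follows essentially the same approach as the paper's own sketch: both argue that because $\mathrm{PGL}_2(\mathbb{Z}_M)$ acts transitively on ordered triples, the masked triples carry no information about $I$, hence the adversary's view is independent of the invariant. Your version is in fact more precise---you make the sharp $3$-transitivity explicit and obtain an information-theoretic independence statement (advantage exactly zero), whereas the paper leaves this at the level of ``projective transformations can arbitrarily reposition triples''; your side remarks on the PRF-versus-uniform gap and on degenerate configurations are also additions beyond what the paper addresses.
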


\begin{proof}[Sketch of proof.]
The cross-ratio function
\[
\mathrm{CR}(z_1, z_2;\ z_3, z_4)
\]
is well-defined only on quadruples and is invariant under the action of \( \mathrm{PGL}_2(\mathbb{Z}_M) \). In particular, for any invertible projective map \( f \), we have:
\[
\mathrm{CR}(f(z_1), f(z_2); f(z_3), f(z_4)) = \mathrm{CR}(z_1, z_2; z_3, z_4).
\]

Thus, the masked triple \( (\tilde{z}_1, \tilde{z}_2, \tilde{z}_3) \) represents the original points up to arbitrary coordinate change (projective basis). Since \( f_i \) is independently chosen for each session, the observed masked triples lie in unrelated projective frames. Without a common reference point (such as \( z_4^{(i)} \)), the cross-ratio becomes uncomputable.

Moreover, for each session, the mapping from the secret quadruple \( (z_1^{(i)}, z_2^{(i)}, z_3^{(i)}, z_4^{(i)}) \) to the masked triple \( (\tilde{z}_1^{(i)}, \tilde{z}_2^{(i)}, \tilde{z}_3^{(i)}) \) is information-losing with respect to \( I \). No algebraic relation exists between three projectively transformed points and the invariant, as projective transformations can arbitrarily reposition triples while preserving cross-ratio only when the fourth point is available.

Therefore, the distribution of observed masked triples is independent of \( I \), rendering it indistinguishable across sessions.
\end{proof}

\section{Extension to Finite Fields and Algebraic Structures}

All invariant-based schemes presented in this work are defined algebraically and make no structural assumption about the base ring beyond arithmetic closure. Consequently, they can be instantiated not only over prime fields \( \mathbb{Z}_p \), but also over:

\begin{itemize}
  \item Modular rings \( \mathbb{Z}_M \), where \( M \) is composite;
  \item Finite fields \( \mathbb{F}_{p^n} \), constructed via irreducible polynomials;
  \item Finite algebras, such as vector spaces over fields with additional multiplication rules (e.g., matrix rings, Clifford algebras).
\end{itemize}

\subsection*{Fields Defined by Irreducible Polynomials}

Finite fields of size \( p^n \), denoted \( \mathbb{F}_{p^n} \), can be constructed by taking the quotient ring:
\[
\mathbb{F}_{p^n} \cong \mathbb{F}_p[x]/(f(x)),
\]
where \( f(x) \) is an irreducible polynomial of degree \( n \) over \( \mathbb{F}_p \). Arithmetic in this field is performed modulo both \( p \) and \( f(x) \), yielding a fully multiplicative field of characteristic \( p \) with \( p^n \) elements~\cite{lidl1997finite}.

Because invariant-based constructions rely only on algebraic expressions (e.g., polynomial evaluation, cross-ratios, discriminants), all core operations extend naturally to such fields. The exponential terms \( p^t \) (or \( \alpha^t \)) can be reinterpreted in multiplicative subgroups of \( \mathbb{F}_{p^n}^\times \), and oscillatory components or masking maps can be adapted accordingly.

\subsection*{Impact on Security and Complexity}

Choosing different field structures affects both security properties and implementation trade-offs:

\begin{itemize}
  \item Larger fields (\( n > 1 \)) increase the entropy of sampled values and expand the space of possible invariants;
  \item Certain attacks (e.g., interpolation, root recovery, brute-force over \(\mathbb{Z}_M\)) become less effective when computations occur in high-order fields;
  \item When working over extension fields, the complexity of inversion and discriminant reconstruction may grow, providing additional hardness.
\end{itemize}

Moreover, embedding schemes into algebraic field extensions may allow encoding additional information in component dimensions (e.g., in field traces, conjugates, or norms), opening the door to richer cryptographic encodings.

\subsection*{Algebraic Extensions and Multidimensional Structures}

Beyond classical fields, invariant-based logic can be extended to finite-dimensional algebras~\cite{dummit2004abstract} over a base ring. For example:
\begin{itemize}
  \item In matrix rings \( \mathrm{Mat}_n(\mathbb{F}_p) \), one may define invariant relations based on determinants, traces, or symmetrized characteristic polynomials;
  \item In Clifford algebras or group rings, one may construct function values \( s(t) \in A \), where \( A \) is a noncommutative or graded algebra, and define invariants via structure-preserving operations;
  \item \textbf{Coordinate algebras and structured invariants:} \\
Invariant-based schemes are not limited to simple numerical fields or rings; they can be naturally extended to more abstract algebraic settings, such as coordinate algebras. These structures, typically written as \( \mathbb{F}_p[x_1, \dots, x_n]/I \), arise when one considers polynomial functions modulo a system of equations represented by an ideal \( I \). In this context, each element corresponds to a function defined on the set of solutions to that system—that is, on the points of an algebraic variety determined by \( I \).

For example, the ring \( \mathbb{F}_p[x, y]/(x^2 + y^2 - 1) \) describes functions over the set of solutions to the equation \( x^2 + y^2 = 1 \) in \( \mathbb{F}_p \). Function values in this ring are equivalence classes of polynomials, and arithmetic is performed modulo the relation \( x^2 + y^2 = 1 \). This framework is common in algebraic geometry and is used to model geometric structures such as curves, surfaces, and higher-dimensional varieties over finite fields.

When invariant-based constructions are defined over such coordinate algebras, the function evaluations \( s(t) \) may take values not in a field or modular ring, but in a structured algebraic environment. The associated invariants can then encode geometric or combinatorial constraints. For example:
\begin{itemize}
  \item A collection of points \( s_1, s_2, s_3, s_4 \in \mathbb{F}_p[x, y]/I \) might satisfy an invariant relation only if they lie on a common curve defined by \( I \);
  \item The invariant might test whether a symbolic "area" or determinant computed from the point coordinates vanishes (e.g., collinearity or planarity conditions);
  \item Alternatively, a cross-ratio could be defined symbolically, provided the denominators involved are invertible within the algebra.
\end{itemize}

However, care must be taken in such settings: unlike in a field, not every nonzero element in a coordinate algebra has a multiplicative inverse. As a result, invariant expressions involving division (like the cross-ratio) are only well-defined when certain elements are known to be invertible. In practice, this means:
\begin{itemize}
  \item Invariants can be defined only on subsets of the coordinate algebra where the required inverses exist;
  \item Alternatively, computations can be lifted to the field of fractions of the coordinate ring, though this may introduce additional complexity;
  \item In some cases, approximate or symbolic invariants can still serve as structural constraints, even without full division.
\end{itemize}

These generalizations allow invariant-based cryptography to incorporate deeper algebraic structure—connecting cryptographic constraints with geometric or combinatorial interpretations. Though more abstract, such extensions may support novel use cases, particularly in protocols involving symbolic commitments, algebraic proofs, or function evaluation over structured domains.
\end{itemize}

Although such generalizations lie beyond the main scope of this work, they indicate that the foundational idea of invariant-based cryptography---ensuring security through structural coherence---is not confined to any specific ring or field, but can naturally extend to a broad range of computable algebraic systems.

\section{Usage Scenarios and Cryptographic Properties}

Invariant-based schemes support a wide range of symmetric cryptographic operations by encoding hidden values within structured algebraic identities. In all cases, a secret value---typically denoted \(v\), \(h\), or \(z_4\) depending on the scheme---is not transmitted directly, but is instead embedded into a relation that only legitimate parties can evaluate, verify, or invert.

This approach enables compact, self-validating message structures with no need for asymmetric primitives. Below, we describe several representative scenarios that illustrate how invariant-based constructions can be deployed.

\subsection*{1. Secure Parameter Exchange}

Two parties share a common secret \(S\) (e.g., a session key or device binding).

\begin{itemize}
  \item Sender embeds a private value \(v\) into an invariant-preserving structure (e.g., roots of a polynomial, masked points, or aligned function evaluations).
  \item The message includes auxiliary data (e.g., \((a_2, a_3, y)\) or \((\tilde{z}_1, \tilde{z}_2, \tilde{z}_3)\)) sufficient for the receiver to reconstruct or verify \(v\).
  \item An integrity tag (such as \(H_{\text{check}}\)) binds the transmitted values to \(S\) and session context.
\end{itemize}

This supports authenticated, replay-resistant exchange of per-message parameters with minimal overhead.

\subsection*{2. Commitment to a Hidden Object}

Invariant-based schemes can serve as a lightweight commitment mechanism.

\begin{itemize}
  \item A party computes \(v := H(\text{object})\), embeds it into an invariant-based encoding using secret \(S\).
  \item The encoding (without \(v\)) is published or sent.
  \item At a later time, \(v\) is revealed; anyone with \(S\) can verify consistency via the invariant relation and hash.
\end{itemize}

This realizes a publicly verifiable commitment scheme without requiring trapdoor assumptions.

\subsection*{3. Split Trust and Cooperative Recovery}

The invariant structure can be split across parties to enforce collaboration.

\begin{itemize}
  \item Sender constructs an encoding where different components (e.g., \(s_1\) and \(s_3\), or \(a_2\) and \(a_3\)) are sent to different recipients.
  \item No single recipient can recover the hidden value alone.
  \item Only when components are combined and processed under the invariant does reconstruction succeed.
\end{itemize}

Such setups support threshold-style access control or co-signed message authorization.

\subsection*{4. Challenge-Response without Secret Disclosure}

Invariant-based constructions enable lightweight authentication via a challenge-response protocol, where the prover demonstrates knowledge of a shared secret \( S \) without revealing it or any derived values explicitly.

\begin{itemize}
  \item \textbf{Setup:} The verifier (e.g., server) selects a random session nonce \( z \) and some structural parameters (e.g., \( u \), polynomial roots \( a_2, a_3 \), or masked points \(\tilde{z}_1, \tilde{z}_2, \tilde{z}_3\)) that partially define an invariant instance.
  
  \item \textbf{Challenge:} These values are sent to the prover (e.g., client), who knows the shared secret \( S \).

  \item \textbf{Response:} The prover reconstructs the hidden part of the structure using \( S \) (e.g., computes the missing root \( a_1 \), offset \( h \), or cross-ratio point \( z_4 \)), and returns either:
  \begin{itemize}
    \item The missing value (e.g., \( h \)), or
    \item A hash binding (e.g., \( H(S, z, h) \)) that confirms knowledge without revealing \( h \) directly.
  \end{itemize}

  \item \textbf{Verification:} The verifier checks consistency with the invariant and validates the hash (if present).
\end{itemize}

This interaction authenticates the prover by confirming structural correctness of the invariant instance, which is only reconstructible with access to \( S \). No direct transmission of the secret or even the embedded value \( v \) is required.

\subsection*{5. Forward-Secure Ephemeral Derivation}

Invariant-based schemes can generate ephemeral secrets on a per-session basis, providing forward secrecy by construction. The key idea is to embed a temporary value (e.g., offset \( h \), embedded secret \( v \), or hidden point \( z_4 \)) into an invariant structure derived from a fresh session nonce \( z \).

\begin{itemize}
  \item Each session uses a new \( z \), ensuring that derived parameters (e.g., \( t(S, z) \)) and resulting invariants differ.
  \item The hidden value is recoverable or verifiable only by parties sharing the secret \( S \), but unlinkable to values in other sessions.
  \item Even if a single session is compromised, prior and future sessions remain protected.
\end{itemize}

This enables forward-secure symmetric coordination: lightweight, stateless, and resilient to partial compromise. Invariant reuse modes (as discussed earlier) allow minimizing bandwidth while preserving derivation uniqueness across time.

\subsection*{6. Stateless Stream Generation from Weak PRNGs}

In the \emph{Shared Root Mode}, the root \( a_1 \) is established at session initialization and reused throughout the session. Once this shared value is agreed upon, the sender (e.g., Alice) can generate a stream of invariant-preserving tuples \(\langle a_2, a_3, h \rangle\) by choosing new roots and offsets for each message.

Crucially, because each session is initialized with a unique nonce \( z \), and \( a_1 \) depends on this \( z \), even weak or deterministic generators (e.g., cyclic counters or linear feedback sequences) used to sample \( a_2, a_3, h \) will produce distinct sequences across sessions. That is:

\begin{itemize}
  \item Even if \( a_2, a_3, h \) are sampled from a fixed non-cryptographic generator, their interpretation and cryptographic embedding differ across sessions due to the influence of \( a_1 \) and \( t = t(S, z) \).
  \item The resulting value \( y = P(t + h) \), where \( P(x) = (x - a_1)(x - a_2)(x - a_3) \), is structurally randomized and unlinkable between sessions.
  \item If additional masking such as \( f(z) \in \mathrm{PGL}_2 \) is used to perturb the transmitted components, the stream becomes further obfuscated.
\end{itemize}

This behavior implies that within-session generation of ephemeral values can remain stateless and efficient without compromising isolation between sessions. Invariant-based design thus enables cryptographic separation through structural dependencies, rather than requiring cryptographically strong randomness at each step.

In practical deployments, this allows lightweight devices to use minimal or hardware-available entropy sources, relying on session-level rekeying and invariant enforcement to achieve security.

\subsection*{Remark}

While each specific scheme (discriminant-based, cross-ratio, or oscillatory) defines its own structure and recovery path, the overall cryptographic properties remain consistent:  
\emph{structural consistency} enforces authenticity;  
\emph{invariant preservation} ensures verifiability;  
\emph{masked construction} preserves confidentiality.

We emphasize that these scenarios represent only a subset of the possible applications. Invariant-based methods offer a versatile toolkit for protocol designers, especially in contexts where compactness, recoverability, and low computational cost are key design goals.

\section{Invariant-Based Puzzles and Constraint Embedding}

The invariant-based framework also opens the door to constructing structured cryptographic puzzles—challenge instances where solutions are constrained both by invariant relations and auxiliary algebraic conditions.

For example, consider the cross-ratio identity:

\[
\mathrm{CR}(z_1, z_2;\ z_3, z_4) = I,
\]

where \( z_1, z_2 \in \mathbb{Z}_M \) and invariant \( I \in \mathbb{Z}_M^\times \) are known. The values \( z_3, z_4 \) are unknown and must be chosen to satisfy the relation. Suppose further that we impose an external constraint such as:

\[
z_3^{z_4} \equiv k \mod M,
\]

for a known value \( k \in \mathbb{Z}_M \). Then the problem becomes: find values \( z_3, z_4 \in \mathbb{Z}_M \) satisfying both the invariant and the external equation.

Such puzzles may serve as:

\begin{itemize}
    \item Verifiable proofs of work or time;
    \item Structured commitments with embedded computational difficulty;
    \item Interactive challenges in protocols requiring controlled verification.
\end{itemize}

The ability to encode constraints into algebraically masked invariant structures provides a novel mechanism for generating tamper-resistant challenges and hybrid algebraic puzzles with adjustable difficulty.

\section{Outlook: Toward a Taxonomy of Invariants}

Across this work and the foundational study~\cite{semenov2025invariant}, we have introduced and analyzed three distinct cryptographic constructions based on algebraic invariants:
\begin{itemize}
    \item A functional four-point invariant over oscillatory evaluations;
    \item A discriminant-based scheme using the structure of cubic polynomials;
    \item A cross-ratio scheme derived from projective geometry and masked under $\mathrm{PGL}_2(\mathbb{Z}_M)$ transformations.
\end{itemize}

Each of these examples demonstrates how invariant-preserving relations can support verifiable, recoverable, and tamper-resistant symmetric protocols. The diversity of these constructions indicates that invariant-based cryptography is not limited to a narrow class of functional identities, but instead offers a flexible and expressive design space.

While this article does not aim to exhaustively classify all possible invariants, it sets a direction for future exploration. Below, we outline several additional categories of invariants that may be suitable for cryptographic adaptation:

\subsection*{1. Classical Algebraic Invariants}
\begin{itemize}
    \item \textbf{Multilinear and bilinear forms:} e.g., $x^T A y$, scalar products $\langle x, y \rangle$; invariant under orthogonal or unitary transformations. These may enable schemes involving masked vector representations or encoded inner products.
    \item \textbf{Symmetric functions:} elementary symmetric polynomials, power sums, and Newton identities; naturally invariant under permutations of inputs. Such structures may be applicable to pseudorandom multi-point encodings.
\end{itemize}

\subsection*{2. Geometric and Projective Invariants}
\begin{itemize}
    \item \textbf{Barycenters and means:} simple averaging invariants such as the center of mass; robust under input reordering.
    \item \textbf{Determinants and geometric volumes:} expressions such as $\det(x_1, x_2, x_3)$ represent oriented areas, volumes, or hypervolumes depending on dimensionality; they serve as multilinear invariants under linear transformations and can be used to construct multi-point schemes with geometric consistency.
\end{itemize}

\subsection*{3. Dynamical and Differential Invariants}
\begin{itemize}
    \item \textbf{Energy or action integrals:} preserved quantities in dynamical systems, such as Hamiltonians or Lagrangians; may be encoded via pseudorandom oscillator dynamics.
    \item \textbf{Lie invariants:} expressions invariant under group actions; applicable when function evaluations $s(t)$ exhibit symmetry under transformations of $t$.
\end{itemize}

\subsection*{4. Information-Theoretic Invariants}
\begin{itemize}
    \item \textbf{Entropy and mutual information:} statistical invariants across message ensembles; useful in generalized probabilistic or multi-session settings.
    \item \textbf{Divergence measures:} e.g., Kullback–Leibler divergence, used to detect tampering or inconsistency in distributions.
\end{itemize}

\subsection*{5. Group-Based and Cryptographic Invariants}
\begin{itemize}
    \item \textbf{Group-theoretic identities:} such as $g^a \cdot g^b = g^{a+b}$; may support schemes in which $s(t) \in \mathbb{G}$, a cryptographic group. This allows potential extensions to elliptic curves or pairings.
\end{itemize}

\subsection*{Perspective}

Rather than providing a definitive classification, we offer this outline as a guide for future development. Each invariant type brings its own structural assumptions, potential masking mechanisms, and verification logic. Exploring their cryptographic viability requires not only algebraic insight but also an understanding of how invariants behave under adversarial conditions, noise, and composition.

Invariant-based cryptography invites a new axis of design: not secrecy through concealment, but integrity through structural preservation. The examples developed here form the initial layer of a broader theory, whose future forms remain to be shaped.

\section*{Limitations and Open Questions}

While invariant-based cryptography offers a promising alternative to traditional symmetric designs, several limitations and risks should be acknowledged:

\begin{itemize}
    \item \textbf{Scheme-specific security analysis:} While the framework relies on structural indistinguishability and the difficulty of invariant recovery, each concrete instantiation requires dedicated cryptographic analysis to establish formal security guarantees. This article presents the conceptual foundation and illustrative constructions, leaving detailed hardness reductions and adversarial modeling for future work.
    
    \item \textbf{Algebraic fragility:} The schemes require precise algebraic consistency. Small implementation errors (e.g., in polynomial construction, masking, or inversion) may lead to silent failure or unintended leakage.
    
    \item \textbf{Side-channel susceptibility:} Like many symmetric primitives, these schemes may be vulnerable to side-channel analysis unless properly masked and implemented with constant-time operations.
    
    \item \textbf{Lack of standardization and review:} Invariant-based cryptographic primitives are novel and have not yet undergone extensive cryptanalytic scrutiny or standardization processes.
    
    \item \textbf{Dependency on field structure:} The algebraic behavior may vary significantly between prime fields, rings, and coordinate algebras, and certain invariants may not be well-defined or invertible in all settings.
\end{itemize}

Further study is required to formalize security guarantees, analyze compositional behavior, and evaluate performance under real-world constraints. We regard this work as an initial step in exploring the potential and limitations of invariant-based symmetric cryptography.

\section*{Conclusion}

This work advances the theory of invariant-based cryptography by establishing a general framework and demonstrating its viability across several distinct constructions. Starting from a functional identity over oscillatory sequences, and expanding to algebraic and geometric invariants such as polynomial discriminants and projective cross-ratios, we have shown that invariant relations can serve as effective cryptographic primitives.

These examples confirm that invariants can enforce structural coherence, resist forgery, and support lightweight symmetric protocols without reliance on classical one-way functions. While many questions remain open, the central message is clear: algebraic preservation can be as powerful as algebraic opacity. Invariant-based schemes offer a compact, flexible, and conceptually rich foundation for future cryptographic design.


\end{document}